  \def\cC{{\mathcal{C}}}
 \def\cN{{\mathcal{N}}}
\def\ba{{\mathbf{a}}} \def\bb{{\mathbf{b}}}  \def\bd{{\mathbf{d}}} \def\be{{\mathbf{e}}}
\def\bff{{\mathbf{f}}}    
   \def\bn{{\mathbf{n}}} 
   \def\bs{{\mathbf{s}}} 
 \def\bv{{\mathbf{v}}}   \def\by{{\mathbf{y}}}
 \def\bh{\mathbf{h}}
\def\bA{{\mathbf{A}}}    
\def\bF{{\mathbf{F}}}  \def\bH{{\mathbf{H}}} \def\bI{{\mathbf{I}}} 
  \def\bM{{\mathbf{M}}} \def\bN{{\mathbf{N}}} 
   \def\bS{{\mathbf{S}}} 
\def\bU{{\mathbf{U}}} \def\bV{{\mathbf{V}}} \def\bW{{\mathbf{W}}}  \def\bY{{\mathbf{Y}}}
\def\argmin{\mathop{\mathrm{argmin}}}
\def\rank{\mathop{\mathrm{rank}}}
\def\vec{\mathop{\mathrm{vec}}}
\def\diag{\mathop{\mathrm{diag}}}
     \def\d4{\!\!\!\!}
\def\bSig{\mathbf{\Sigma}}
  \def\R{{\mathbb{R}}} \def\C{{\mathbb{C}}}   \def\E{{\mathbb{E}}}
  \def\la{\left|}     \def\ra{\right|}    \def\lA{\left\|}     \def\rA{\right\|}
  \def\-{\! - \!}  \def\+{\! + \!}  \def\={\! = \!}  \def\>{\! > \!}
\newtheorem{proposition}{Proposition}
\newtheorem{lemma}{Lemma}
\newcommand{\bef}{\begin{figure}}
\newcommand{\eef}{\end{figure}}
\newcommand{\beq}{\begin{eqnarray}}
\newcommand{\eeq}{\end{eqnarray}}
\newenvironment{proof}[1][Proof]{\begin{trivlist}
\item[\hskip \labelsep {\bfseries #1}]}{\end{trivlist}}
\newcommand{\qed}{\nobreak \ifvmode \relax \else
\ifdim\lastskip<1.5em \hskip-\lastskip \hskip1.5em plus0em
minus0.5em \fi \nobreak \vrule height0.5em width0.5em
depth0.25em\fi}
\begin{document}
\begin{spacing}{1.0}
\pagenumbering{arabic}

\title{Two-stage Method for Millimeter Wave Channel Estimation}
\author{Wei Zhang, Shu-Hung Leung, and Taejoon Kim
\thanks{
{W. Zhang and S. H. Leung are with the Department of Electronic Engineering, City University of Hong Kong, , Kowloon, Hong Kong, China (e-mail: wzhang237-c@my.cityu.edu.hk, eeeugshl@cityu.edu.hk).}
{T. Kim is with the Department of Electrical Engineering and Computer Science, University of Kansas, KS 66045, USA (e-mail: taejoonkim@ku.edu).}
}
}
\maketitle
\begin{abstract}
  The millimeter wave is a promising technique for the next generation of mobile communication. The large antenna array is able to provide sufficient precoding gain to overcome the high pathloss at millimeter wave band. However, the accurate channel state information is the key for the precoding design. Unfortunately, the channel use overhead and complexity are two major challenges when estimating the channel with high-dimensional array. In this paper, we propose a two-stage approach which reduces the channel use overhead and the computational complexity. Specifically, in the first stage, we estimate the column subspace of the channel matrix. Based on the estimated column subspace, we design the training sounders to acquire the remaining coefficient matrix of the column subspace. By dividing the estimation task into two stages, the training sounders for the second stages are only targeted for the column subspace, which will save the channel uses and the computational complexity as well.
\end{abstract}
\section{Introduction}
The millimeter wave band (30GHz $\sim$ 300GHz) is able to provide large bandwidth for the mobile communication, which make the millimeter wave communication a potential candidate for the 5G cellular networks.
Compared with the current 4G cellular networks, due to the high frequency of millimeter wave band, the signal will experience pathloss. Fortunately, the short wavelength enables the transmitter and receiver to be equipped with large antenna array, which will provide sufficient precoding gain to overcome the high pathloss.

Due to short wavelength, the millimeter wave channel experiences sparse property, which means the number of paths between transmitter and receiver is quite limited. Therefore, one possible method is to search the paths of the channel. Intuitively, one can try all the possible precoders and combiners for the transmitter and receiver respectively, then select the pair of precoder and combiner which provides the highest channel gain. The drawback of this exhaustive search method is that the extremely high searching overhead when the dimension of antenna array is large. In order to reduce the search overhead, the hierarchical codebook is proposed where the searching task is divided into several layers, and each layer will seek the required resolution of angles of arrival (AoAs) and angles of departure (AoDs). To further enhance the estimation accuracy, the generalized detection probability \cite{codebookDesign} is proposed as metric to guide the design of codebook.

Due to the fact that the millimeter wave channel experience sparse scattering, the sparse signal recovery method can be adopted to obtain the estimation of channel, which will save the number of required observations to estimate the channel. In \cite{ompEstimation}, the orthogonal matching pursuit (OMP) is utilized to recover the sparse vector associated with the position of AoAs and AoDs in the angle grids.
Due to the low-rank characteristic of channel matrix, \cite{ZhangSparse} proposed a sparse subspace decomposition method to recover the low-rank matrix.
In \cite{jointSparse}, the joint sparse and low-rank structure is utilized to model the millimeter wave channel, where the first step is to use the low-rank structure and the second step will utilize the sparse structure of the channel matrix. However, considering the degrees of freedom of the rank-$L$ matrix $\bH \in \C^{N_r \times N_t}$ is $d_{DoF}=L(N_r+N_t-L)$, the number of observations for the existing methods is quite larger than $d_{DoF}$. In other words, the channel use overhead is too large compared to the value of $d_{DoF}$. Moreover, when the sparse signal recover methods are adopted for channel estimation, these algorithms will experience high computational complexity.

In this paper, we propose a two-stage channel estimation method which require much lower channel use overhead and complexity as well. Specifically, in the first stage, we will sample the channel matrix to obtain the column subspace of the channel. Then, we can express the channel matrix by using the estimated column subspace.
Then in the second stage, we only need to estimate the coefficient matrix on the column subspace instead of the full space. By doing so, we will not waste our observations for the orthogonal subspace of the column subspace. Moreover, the coefficient matrix can be obtained by designing the corresponding receiver sounder effectively, which requires small computational complexity.

This paper is organized as follows, in section II, we introduce the signal and channel model in millimeter wave systems. Then,the column subspace sampling strategy is given by section III. By using the estimated column subspace, the estimation of  the remaining channel matrix is discussed in section IV. Finally, we evaluate the estimate accuracy of the proposed two-stage channel estimation method with the benchmarks.
\subsubsection*{Notations}
A bold lower case letter $\ba$ is a vector, a bold capital letter $\bA$ is a matrix.
$\bA^T,\bA^{\!-1}$, tr($\bA$), $| \bA  |$, $\| \bA \|_F$, $\| \bA  \|_*$, and $\|\ba\|_2$ are, respectively,    the transpose, inverse, trace, determinant, Frobenius norm, nuclear norm (i.e., the sum of the singular values of $\bA$) of $\bA$, and $l_2$-norm of $\ba$.
$[\bA]_{:,i}$, $[\bA]_{i,:}$, $[\bA]_{i,j}$, $[\ba]_i$ are, respectively, the $i$th column, $i$th row, $i$th row and $j$th column entry of $\bA$, and $i$th entry of vector $\ba$.
$\vec(\bA)$ stacks the columns of $\bA$ and form a long column vector.
$\diag(\bA)$ extracts the diagonal entries of $\bA$ to form a column vector.
$\bI_M \! \in \! \R^{M\times M}$ is the identity matrix.

\section{Signal Model}
In this section, we will introduce the millimeter wave system, channel model, and training signal structures.

We assume the transmitter and receiver to be equipped with $N_t$ and $N_r$ antennas, respectively.
Suppose the number of paths between transmitter and receiver be $L$, i.e., $L \ll min\{N_r, N_t\}$. The channel expression with the uniform linear array is given by
\beq
\bH = \sqrt{\frac{N_r N_t}{L}}\sum_{l=1}^{L} h_l \ba(\theta_{r,l}) \ba(\theta_{t,l}), \label{channel model}
\eeq
where $\ba(\theta_{r,l})$ and $\ba(\theta_{t,l})$ are the array response vectors of transmit and receive antenna array. The transmitter and receiver usually utilize the uniform linear array, where $\ba(\theta)$ is given by
\beq
\ba(\theta)=\frac{1}{\sqrt{N}}\left[ 1, e^{-j\frac{2\pi}{\lambda}d \sin \theta} , \ldots,e^{-j\frac{2\pi}{\lambda}d(N-1) \sin \theta}  \right]^T
\eeq
where $N$ is the number of antenna array at the transmitter or the receiver, $\lambda$ is the wavelength, $d=\frac{1}{2} \lambda$ is the antenna distance. Here, we assume $\theta_{r,l}$ and $\theta_{t,l}$ are both uniform distributed in $(0,2\pi)$, and the gain of paths has the following distribution $h_l \sim \cC \cN(0,1)$. Here, the channel model \eqref{channel model} in can be written as a matrix form,
\beq
\bH=\bA_r \diag(\bh) \bA_t^T, \label{matrix expression}
\eeq
where $\bA_r=[\ba(\theta_{r,1} ),\ldots,\ba(\theta_{r,L})] \in \C^{N_r \times L}$ and $\bA_t=[\ba(\theta_{t,1} ),\ldots,\ba(\theta_{t,L})]\in \C^{N_t \times L}$. The $i$th column of $\bH$ can be written as
\beq
\bH_i = h_1\ba(\theta_{r,1})e^{-j\pi(i-1)\sin\theta_{t,1}}\cdots h_L\ba(\theta_{r,L})e^{j\pi(i-1)\sin\theta_{t,L}}. \nonumber
\eeq

Suppose the sub-matrix $\bH_S = \bH \bS \in \C^{N_r \times m}$ which selects the first $m$ columns of $\bH$. In the following, we will show that when $m \ge L$, $\bH_S$ shares the same column subspace with $\bH$.

\begin{lemma}
When $m \ge L$, then $\bA_t^T \bS$ is a full row rank matrix. Moreover, the column subspace of $\bH_S=\bA_r diag(\bh) \bA_t^T \bS$ is equal to the one of $\bH$.
\end{lemma}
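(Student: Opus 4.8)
The plan is to reduce the lemma to the nonsingularity of a Vandermonde matrix together with the elementary fact that right-multiplication by a full-row-rank matrix does not change the column space.

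First I would make the structure of $\bS$ explicit: selecting the first $m$ columns of $\bH$ means $\bS\in\R^{N_t\times m}$ has top $m\times m$ block equal to $\bI_m$ and all remaining rows zero, so $\bA_t^T\bS$ is the $L\times m$ matrix formed by the first $m$ columns of $\bA_t^T$, equivalently the transpose of the top $m\times L$ block of $\bA_t$. By the definition of $\ba(\cdot)$, the $(k,l)$ entry of that block is $\frac{1}{\sqrt{N_t}}z_l^{\,k-1}$ with $z_l=e^{-j\pi\sin\theta_{t,l}}$; that is, it is $\frac{1}{\sqrt{N_t}}$ times an $m\times L$ Vandermonde matrix in the nodes $z_1,\dots,z_L$. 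Since the $\theta_{t,l}$ are i.i.d.\ continuous on $(0,2\pi)$, the event that $\sin\theta_{t,l}=\sin\theta_{t,l'}$ for some $l\ne l'$ has probability zero, so almost surely the nodes are pairwise distinct, and a Vandermonde matrix with distinct nodes and at least as many rows as columns has full column rank $L$. Hence its transpose $\bA_t^T\bS$ has full row rank $L$ whenever $m\ge L$, which is the first assertion.

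Next I would record the auxiliary facts needed for the subspace claim. The same Vandermonde argument applied to the full arrays $\bA_r$ (with $N_r\ge L$ rows) and $\bA_t$ (with $N_t\ge L$ rows) shows both have full column rank $L$; and $h_l\sim\cC\cN(0,1)$ gives $h_l\ne0$ almost surely, so $\diag(\bh)$ is invertible. Then I would assemble the column-space claim: write $\bH=\bA_r\bM$ and $\bH_S=\bA_r\bM\bS$ with $\bM=\diag(\bh)\bA_t^T\in\C^{L\times N_t}$. Since $\diag(\bh)$ is invertible and $\bA_t^T$ has full row rank $L$, $\bM$ has full row rank $L$; since in addition $\bA_t^T\bS$ has full row rank $L$, so does $\bM\bS=\diag(\bh)(\bA_t^T\bS)$. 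For any matrix $\bA$ and any full-row-rank $\bN$ one has $\mathrm{col}(\bA\bN)=\bA\,\mathrm{col}(\bN)=\mathrm{col}(\bA)$; applying this twice yields $\mathrm{col}(\bH_S)=\mathrm{col}(\bA_r)=\mathrm{col}(\bH)$.

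The only delicate point is the distinctness of the Vandermonde nodes: one must either invoke the almost-sure/genericity argument above or adopt the standing assumption that the departure angles $\theta_{t,l}$ are distinct (and likewise the arrival angles $\theta_{r,l}$), after which everything else is routine bookkeeping with ranks. I would flag this explicitly so the statement is not read as a deterministic claim for arbitrary angle configurations.
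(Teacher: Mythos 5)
Your proof is correct and follows the same route as the paper's own argument: reduce the column-space claim to the full row rank of $\bA_t^T\bS$, combined with the full column rank of $\bA_r$ and the invertibility of $\diag(\bh)$. The only difference is that you prove the key rank fact directly via the Vandermonde structure of the subsampled steering matrix and an almost-sure distinctness argument for the nodes $e^{-j\pi\sin\theta_{t,l}}$, whereas the paper simply cites a reference for it; your version is more self-contained and usefully makes explicit the genericity (distinct-angles) assumption that the paper leaves implicit.
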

\begin{proof}
  From the expression \eqref{matrix expression}, we can know that $\bA_t$ and $\bA_r$ are full rank when the angle $\{\theta_{t,l}\}_{l=1}^L$ are distinct.
  Due to the fact that $\bH_S=\bA_r diag(\bh) \bA_t^T \bS$, in order to show the equivalence of the column subspace between $\bH$ and $\bH_S$, it is sufficient to show that $\bA_t^T \bS$ is full row rank. According to \cite{spatially}, when $m \ge L$, $\bA_t^T \bS$ will be a full row rank matrix. This concludes the proof.
\end{proof}

Suppose there are $N_{RF}$ RF chains at the transmitter and receiver side, the sampling observation at the $k$th channel use is given by
\beq
\by_k= \bW_k^*\bH \bff_k+\bW_k^* \bn_k,
\eeq
where $ \bW_k =  \bW_{A,k} \bW_{D,k} \in \C^{N_r \times N_{RF}}$.  $\bW_{A,k} \in \C^{N_r \times N_{RF}}$ and $\bW_{D,k}\in \C^{N_{RF} \times N_{RF}}$ are the receiver analog and digital sounder, respectively. The $\bff_k =  \bF_{A,k} \bF_{D,k} \bs_k \in \C^{N_t \times 1}$ is the transmitter sounder. Considering that $N_{RF} \ge 2$, the entries in $\bff_k$ can be arbitrary value at each channel use. Note that we have the power constraint for the sounding in each channel use$\lA \bff_k\rA_F^2 \le 1$, and noise $\bn_k \sim \cN(\boldmath{0},\sigma^2\bI)$. Thus, the signal to noise ratio is $1/\sigma^2$.
\begin{figure}[t]
\centering
\includegraphics[width=3 in]{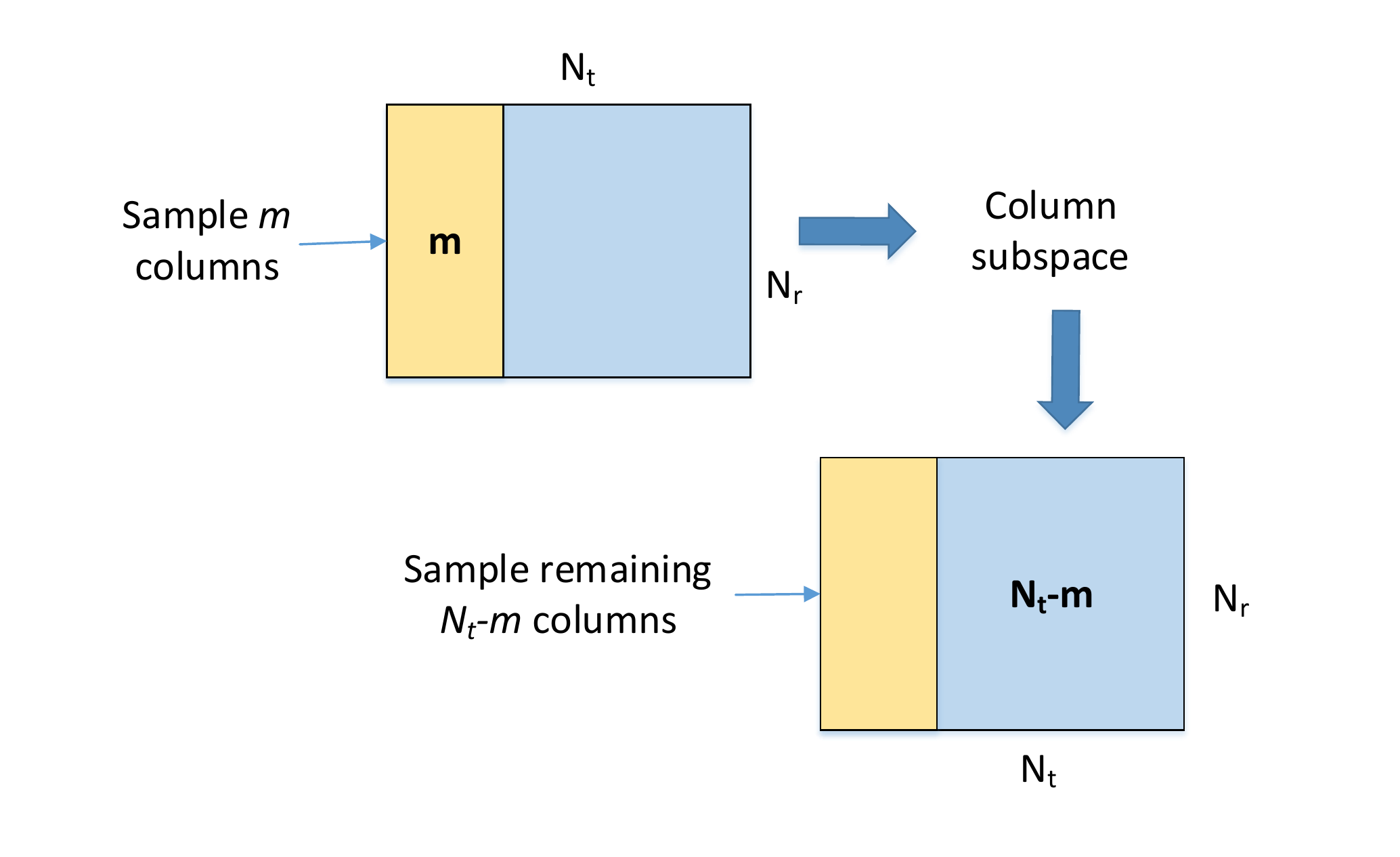}
\caption{The illustration of Algorithm} \label{framework}
\end{figure}

\section{Sample the column subspace}

In this section, we will show how to obtain the column subspace of the channel matrix.

Provided that the first $m$ ($m\ge L$) columns of $\bH$ has the same column subspace with $\bH$. To obtain the column subspace of $\bH$, we can just sample the first $m$ columns of $\bH$.
To illustrate how to get one single column of $\bH$ under the millimeter wave system, we take the first column of $\bH$ for example, i.e., $\bh_1$.
At the transmitter, we let $\bff_1=\be_1$, where $\be_1$ denotes the zero vector except the first entry is one. At the receiver side, we need $N_r/N_{RF}$  channel uses to form the full-rank matrix $\bM \in C^{N_r \times N_r }$,
\beq
\bM = [\bW_1, \bW_2,\cdots,\bW_{N_r/N}],
\eeq
where $\bW_i, i=1,2,\cdots,N_r/N_{RF}$ denotes the combiner at the receiver side.  Then we stack the observations of  $N_r/N_{RF}$  channel uses as $\bb_1 = [\by_1, \by_2,\ldots,\by_{N_r/N}]$,
\beq
 \left[
\begin{matrix}
  \by_1 \\
 \by_2 \\
  \vdots \\
  \by_{\frac{N_r}{N_{RF}}}
\end{matrix}
\right]
=
\left[
\begin{matrix}
  \bW_1^* \\
 \bW_2^* \\
  \vdots \\
  \bW_{\frac{N_r}{N_{RF}}}^*
\end{matrix}
\right]
\bH \be_1 +
\left[
\begin{matrix}
  \bW_1^* \\
 \bW_2^* \\
  \vdots \\
  \bW_{\frac{N_r}{N_{RF}}}^*
\end{matrix}
\right] \bn_1
=
\bM^*
\bh_1 +
\bM^*\bn_1 \label{H1 observation}
\eeq
\begin{proposition}
Suppose that we obtain the estimation of $\bh_1$ by solving least square problem as follows,
\beq
\min _{\bh_1}\lA \bb_1 - \bM\bh_1 \rA_F^2. \nonumber
\eeq
 If the $\bM$ in \eqref{H1 observation} is a full-rank matrix, then the estimation error of $\bh_1$ will only depend on $\bn_1$ in \eqref{H1 observation}. In other words, any full-rank matrix $\bM$ shares the same the estimation error.
\end{proposition}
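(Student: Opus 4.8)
The plan is to write the least-squares estimator of $\bh_1$ in closed form and observe that the design matrix cancels completely, leaving the error equal to the bare noise term. Since $\bM \in \C^{N_r \times N_r}$ is square, the hypothesis that it is full rank means it is invertible, hence so is $\bM^*$, the operator multiplying $\bh_1$ in \eqref{H1 observation}. The least-squares objective of the proposition is then minimized with residual exactly $0$ by the unique solution of the normal equations, which collapses to $\hat{\bh}_1 = (\bM^*)^{-1}\bb_1$ (equivalently $\hat{\bh}_1 = (\bM\bM^*)^{-1}\bM\bb_1$, the same thing because $(\bM\bM^*)^{-1}\bM\bM^* = \bI$).

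Next I would substitute the observation model $\bb_1 = \bM^* \bh_1 + \bM^* \bn_1$ from \eqref{H1 observation} into this closed form. Both occurrences of $\bM^*$ are annihilated by $(\bM^*)^{-1}$, so $\hat{\bh}_1 = (\bM^*)^{-1}\bigl(\bM^* \bh_1 + \bM^* \bn_1\bigr) = \bh_1 + \bn_1$, and therefore the estimation error is $\hat{\bh}_1 - \bh_1 = \bn_1$. This expression does not involve $\bM$ at all; since the only property of $\bM$ used is invertibility, every admissible full-rank combiner stack yields the identical error $\bn_1$ for a given noise realization, and in particular the same mean-square error $\E \lA \bn_1 \rA_2^2 = N_r \sigma^2$. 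That is precisely the assertion of the proposition.

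I do not expect a genuine obstacle: the argument is one line of linear algebra once the closed form is written down. The two things to handle carefully in the write-up are purely bookkeeping. First, the least-squares design matrix must be taken to be the same operator $\bM^*$ that multiplies $\bh_1$ in the forward model \eqref{H1 observation} (the proposition's ``$\bM\bh_1$'' in the objective playing that role, up to the conjugate transpose), since otherwise the two factors of $\bM^*$ would not cancel. Second, one must use that the noise appearing in the stacked observation is $\bM^*\bn_1$ — the receiver noise already shaped by the combiners, consistent with $\by_k = \bW_k^*\bH\bff_k + \bW_k^*\bn_k$ — so that applying $(\bM^*)^{-1}$ returns the unshaped vector $\bn_1$ rather than some $\bM$-dependent transformation of it.
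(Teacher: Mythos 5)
Your proof is correct and follows essentially the same route as the paper: write the least-squares solution in closed form, substitute the observation model, watch the design matrix cancel, and conclude the error is exactly $\bn_1$. The only difference is cosmetic --- you exploit squareness to use $(\bM^*)^{-1}$ directly while the paper writes the normal-equations form $(\bM^*\bM)^{-1}\bM^*$, and you are in fact more careful than the paper about keeping the $\bM$ versus $\bM^*$ bookkeeping consistent between the objective and the forward model in \eqref{H1 observation}.
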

\begin{proof}
From equation \eqref{H1 observation},
if we want to recover $\bh_1$ by solving the following problem
\beq
\min _{\bh_1}\lA \bb_1 - \bM\bh_1 \rA_F^2,
\eeq
then the solution is
\beq
\widehat{\bh}_1 &=& (\bM^*\bM)^{-1}\bM^*\bb_1 \nonumber\\
&=&(\bM^*\bM)^{-1}\bM^*(\bM\bh_1+\bM\bn_1)\nonumber \\
&=&\bh_1+\bn_1.
\eeq
The mean square error of $\bh_1$ is
\beq
\E \left( \lA \bh_1 -\widehat{\bh}_1  \rA_F^2\right) = \E \left( \lA \bn_1 \rA_F^2\right),
\eeq
which means that the MSE does not depend on $\bM$, which concludes the proof.
\end{proof}

Considering the analog constraint for $\bW_{A}$, such that $|[\bW_A]_{i,j}| = \frac{1}{\sqrt{N_r}}$, we let $\bM$ be the DFT matrix, which is obviously full-rank. After collecting all the observations, we have $\bY_S \in \C^{N_r \times m}$
\beq
\bY_S = \bM^* \bH_S+ \bM^*\bN, \label{expression1}
\eeq
where $\bH_S = [\bH]_{1:m}$ is the sub-matrix which selects the first $m$ columns of $\bH$, and $\bN \in \C^{N_r \times m}$ denotes the noise matrix with $[\bN]_{i,j} \sim \cC \cN(0,\sigma^2)$. Our next step is to obtain the column subspace of $\bH_S$ from the noisy observation $\bY_S$.
Since $\bM$ is full-rank, we rewrite the expression in \eqref{expression1} as
\beq
\tilde{\bY}_S = (\bM^*)^{-1}\bY_S =  \bH_S+ \bN.
\eeq
Due to the low-rank property of $\bH_S$, we estimate $\bH_S$ by solving the following standard principle analysis problem,
\beq
\widehat{\bH}_S = \argmin _{\bH_S} \lA \tilde{\bY}_S - \bH_S \rA_F^2  \nonumber \\
\text{subject to } \rank(\bH_S)=L. \label{pca}
\eeq
Then the column subspace of the channel matrix is given by the dominate $L$ left singular vectors of $\widehat{\bH}_S$, i.e., $col(\widehat{\bU})$. Note that the number of channel uses in the first step is $K_1 = (mN_r)/N_{RF}$.

Due to the noisy component, i.e., $\bN$ in \eqref{expression1}, we will analyze the subspace estimation accuracy of $col(\widehat{\bU})$ compared to the true column subspace $col(\bU)$ of $\bH_S$. According to \cite{cai2018}, the perturbed SVD will have the following bounds
\beq
\lA \bU \bU^* - \widehat{\bU} \widehat{\bU}^* \rA_2^2 \le \frac{CN_r\left(\sigma_L^2(\bH_S)\sigma^2+m \sigma^4\right)}{\sigma_L^4(\bH_S)} \wedge 1
 ,
\eeq
where $C$ is a constant, $\sigma_L(\bH_S)$ is the $L$th largest singular value of $\bH_S$, and $\sigma^2$ is the noise level of entries in $\bN$. We expand the right hand,
\beq
\!\!\!\!\frac{CN_r\left(\sigma_L^2(\bH_S)\sigma^2+m \sigma^4\right)}{\sigma_L^4(\bH_S)}\!\!\!\!\! & = &\!\!\!\!
 C\!\! \left(\! \frac{\sigma^2 N_r}{\sigma^2_L(\bH_S)} \! +\!  \frac{N_r m \sigma^4}{\sigma_L^4(\bH_S)} \!\! \right) \label{bound1}
\eeq
Since $\bH_S \in \C^{N_r \times m}$, when the value of $m$ increases, the value of $\sigma_L(\bH_S)$ will change accordingly. In the following, we will show the values of two parts on the right hand side of \eqref{bound1} will decrease with $m$.
\begin{lemma} \label{pro1}
Suppose $\bH_{S}$ selects first $m$ columns of $\bH$, and $\tilde{\bH}_{S} = [ \bH_{S}, \bh]$, which add one more column with respect to $\bH_{S}$. Then, we have the following
\beq
0 \le \sigma_L^2(\tilde{\bH}_{S}) - \sigma_L^2(\bH_{S}) \le |a_{L}|^2,
\eeq
where $\ba = \bU^* \bh$. Let the SVD of $\bH_S$ is given by $\bH_{S} = \bU \bSig \bV^*$, where $\bU \in \C^{N_r \times L}$, $\bV \in \C^{N_r \times L}$, and $\bSig \in \R^{L \times L}$ is in decreasing order.
\end{lemma}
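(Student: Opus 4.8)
The plan is to reduce the claim to a rank-one positive-semidefinite perturbation of the $L\times L$ matrix $\bSig^{2}$ and then read off both inequalities from the Courant--Fischer min--max principle. The key preliminary observation is that, since $m\ge L$, Lemma 1 gives $\mathrm{col}(\bH_{S})=\mathrm{col}(\bH)$, so the appended column $\bh$ (a further column of $\bH$) already lies in $\mathrm{col}(\bH_{S})=\mathrm{col}(\bU)$ and therefore $\bh=\bU\ba$ with $\ba=\bU^{*}\bh$. Using the SVD $\bH_{S}=\bU\bSig\bV^{*}$ this yields
\[
\tilde{\bH}_{S}\tilde{\bH}_{S}^{*}=\bH_{S}\bH_{S}^{*}+\bh\bh^{*}=\bU\bigl(\bSig^{2}+\ba\ba^{*}\bigr)\bU^{*}.
\]
Because $\bU^{*}\bU=\bI_{L}$, the nonzero eigenvalues of $\tilde{\bH}_{S}\tilde{\bH}_{S}^{*}$ are exactly the $L$ eigenvalues of the Hermitian matrix $\bSig^{2}+\ba\ba^{*}$ (the remaining $N_{r}-L$ being zero), and all of them are positive since $\bSig^{2}\succ 0$; hence $\sigma_{L}^{2}(\tilde{\bH}_{S})=\lambda_{\min}(\bSig^{2}+\ba\ba^{*})$, whereas $\sigma_{L}^{2}(\bH_{S})=\lambda_{\min}(\bSig^{2})=\sigma_{L}^{2}$. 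I would first record that $\bH_{S}$ has rank exactly $L$ — from $\bA_{r}$ being full column rank, $\diag(\bh)$ invertible, and $\bA_{t}^{T}\bS$ full row rank by Lemma 1 — so that $\sigma_{L}>0$ and $\bU,\bSig$ are well defined.

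Given this identity, both inequalities are short. For the left one, $\ba\ba^{*}\succeq 0$, so by Weyl's monotonicity the eigenvalues only increase, giving $\lambda_{\min}(\bSig^{2}+\ba\ba^{*})\ge\lambda_{\min}(\bSig^{2})=\sigma_{L}^{2}(\bH_{S})$. For the right one, I apply $\lambda_{\min}(\bSig^{2}+\ba\ba^{*})=\min_{\|\bx\|_{2}=1}\bx^{*}(\bSig^{2}+\ba\ba^{*})\bx$ and test with the $L$th coordinate vector $\be_{L}$; since $\bSig$ is diagonal with $[\bSig]_{L,L}=\sigma_{L}$ one gets $\be_{L}^{*}\bSig^{2}\be_{L}=\sigma_{L}^{2}$ and $\be_{L}^{*}\ba\ba^{*}\be_{L}=|\be_{L}^{*}\ba|^{2}=|a_{L}|^{2}$, hence $\lambda_{\min}(\bSig^{2}+\ba\ba^{*})\le\sigma_{L}^{2}+|a_{L}|^{2}$. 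Substituting the identities of the first paragraph gives $0\le\sigma_{L}^{2}(\tilde{\bH}_{S})-\sigma_{L}^{2}(\bH_{S})\le|a_{L}|^{2}$.

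There is no genuinely hard step here; the whole content is the first reduction, i.e.\ noticing that appending another column of $\bH$ does not enlarge the column space (so $\bh\in\mathrm{col}(\bU)$) and that this turns the $N_{r}\times N_{r}$ Gram-matrix update into a rank-one update of an $L\times L$ matrix — after which Weyl's inequality and a single well-chosen Rayleigh-quotient test vector finish it. I would also flag that the hypothesis $\bh\in\mathrm{col}(\bU)$ is essential: for an arbitrary appended vector possessing a component orthogonal to $\mathrm{col}(\bU)$, the $L$th singular value can jump by more than $|a_{L}|^{2}$, so the lemma must be read with $\bh$ a column of $\bH$ and $m\ge L$, exactly the regime in which it is applied to \eqref{bound1}.
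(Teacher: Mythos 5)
Your proof follows essentially the same route as the paper's: reduce $\tilde{\bH}_S\tilde{\bH}_S^*$ to the rank-one update $\bU(\bSig^2+\ba\ba^*)\bU^*$, obtain the lower bound from positive semidefiniteness of $\ba\ba^*$ (Weyl/Rayleigh-quotient monotonicity), and obtain the upper bound by evaluating the Rayleigh quotient of the perturbed matrix at the $L$th coordinate vector, which is the least eigenvector of $\bSig^2$. The only differences are improvements in rigor on your side: you correctly write $\bSig^2$ where the paper writes $\bSig$ for the Gram matrix, and you make explicit (via Lemma 1) the tacit assumption $\bh\in\mathrm{col}(\bU)$ that the paper uses silently in the step $\bh\bh^*=\bU\ba\ba^*\bU^*$.
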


\begin{proof}
Since we want to derive the relationship between the least singular value of $\bH_S$ and $\tilde{\bH}_{S}$, we can turn to the least eigenvalue of $\tilde{\bH}_{S} \tilde{\bH}_{S}^*$ and $\bH_{S} \bH_{S}^*$.
\beq
\tilde{\bH}_{S} \tilde{\bH}_{S}^* &=& \bU \bSig \bU^* + \bh \bh^* \nonumber\\
&=&\bU \bSig \bU^* + \bU \ba \ba^* \bU^*  \nonumber \\
&=&\bU (\bSig + \ba \ba^*)  \bU^*. \label{eigen1}
\eeq
Assume the eigenvalue decomposition of $(\bSig + \ba \ba^*)$ is $\tilde{\bU}_L \tilde{\bSig}_L \tilde{\bU}_L^*$. Thus, \eqref{eigen1} can be written as
\beq
\bU (\bSig + \ba \ba^*)  \bU^* = \bU \tilde{\bU}_L \tilde{\bSig}_L \tilde{\bU}_L^*  \bU^*.
\eeq
Then, the $L$th largest eigenvalue of $\tilde{\bH}_{S} \tilde{\bH}_{S}^*$ can be obtained from $\tilde{\bSig}_L$,
\beq
\sigma_L^2(\bH_{S}) = \lambda_L(\bH_{S} \bH_{S}^*) = \lambda_L(\bSig_L)\\
\sigma_L^2(\tilde{\bH}_{S}) = \lambda_L(\tilde{\bH}_{S} \tilde{\bH}_{S}^*) = \lambda_L(\bSig + \ba \ba^*) =  \lambda_L(\tilde{\bSig}_L).
\eeq
Let $\bv$ be the least eigenvector of $\bSig$, and $\tilde{\bv}$ is for $(\bSig + \ba \ba^*)$. We can have the following inequality
\beq
\tilde{\bv}^* (\bSig + \ba \ba^*)\tilde{\bv} \ge \tilde{\bv}^* \bSig \tilde{\bv} \ge \bv^* \bSig \bv.
\eeq
Therefore,
\beq
\sigma_L^2(\tilde{\bH}_{S}) - \sigma_L^2(\bH_{S}) \ge 0 \label{inequality1}.
\eeq
Meanwhile, due to the inequality $\bv^* (\bSig + \ba \ba^*) \bv \ge \tilde{\bv}^* (\bSig + \ba \ba^*)\tilde{\bv}$, combining the equation \eqref{inequality1} gives
\beq
\tilde{\bv}^* (\bSig + \ba \ba^*) \tilde{\bv}  - \bv^* \bSig \bv &\le&  \bv^* (\bSig + \ba \ba^*) \bv  - \bv^* \bSig \bv \nonumber \\
 &=&\bv^* (\bSig + \ba \ba^*) \bv  - \bv^* \bSig \bv \nonumber \\
  &=&\bv^*  \ba \ba^* \bv  \nonumber \\
 & =&  |a_L|^2,
\eeq
where $\ba = \bU^* \bh$, and $a_L$ is the $L$th element of $\ba$. This concludes the proof.
\end{proof}

The Lemma \ref{pro1} shows that the least singular value of $\bH_S$, i.e., $\sigma_L(\bH_S)$, will increase with $m$. Thus the value of first part in \eqref{bound1} will decrease with $m$.
For the second part, from the upper bound in Lemma \ref{pro1}, $\sigma^4_L(\bH_S)$ will increase approximately  2-order with $m$, i.e., $\mathcal{O}(m^2)$. Therefore, the value of second part will have ${N_r m \sigma^4}/{\sigma_L^4(\bH_S)}\sim \mathcal{O}(1/m)$, which means it also experiences decreasing with $m$.
To sum up, if we sample more columns in the first stage, we can acquire more accurate column subspace of the channel matrix. In the simulation part, we will validate this proposition.
\begin{algorithm} [t]
\caption{Two-stage millimeter wave channel estimation}
\label{alg1}
\begin{algorithmic} [1]
\STATE Input: channel dimension: $N_r$, $N_t$; number of RF chains: $N_{RF}$; channel paths: $L$.
\STATE Initialization: Generate the column selecting matrix $\bS \in\R^{ N_t \times m}$.
\STATE Column subspace learning:

(1).~Sample $i$th column of $\bH$ in $i_1,i_2,\cdots,i_{N_r/N_{RF}}$ channel use. The receiver sounder is
$\bM = [\bW_1, \bW_2,\cdots,\bW_{N_r/N_{RF}}]$,
the transmitter sounder is $\bff_i = \be_i, ~i=1,2,\cdots,m$.

(2). The column subspace is obtained by solving \eqref{pca}, which is given by the dominate $L$ left singular vector of $\widehat{\bH}_S$, i.e., $\widehat{\bU}$.

\STATE Sample the remaining matrix:

(1). Let the transmit sounder as $\be_i$, and design the receiver sounder according to \eqref{receiver sounder}.

(2). \!\!The observation is $\by_i = \widehat{\bW}^*\bH\bff_i +\widehat{\bW}^*\bn_i$. The estimation of $i$th column is given by $\widehat{\bh}_i = \widehat{\bW} \widehat{\bW}^* \bh_i, ~i=1,2,\cdots,N_t-m$.

(3). Stack the $N_t-m$ columns and denote it as $\widehat{\bH}_R$.
\STATE Output: Estimate result $\widehat{\bH} = [\widehat{\bH}_S, \widehat{\bH}_R]$.
\end{algorithmic}
\end{algorithm}

\section{Learn the remaining matrix}
In this section, we will show how to learn the coefficient matrix associated with the estimated column subspace $\widehat{\bH}$, i.e., $\widehat{\bU}$.

Since we have already sample the first $m$ columns of $\bH$ in the first step, thus we only need to sample the remaining $N_t - m$ columns. For the $i$th column, we also let the transmitter sounder $\bff_i=\be_i$ at the transmitter side. At the receiver side, since the estimated column subspace is $\widehat{\bU}$,
we design the analog sounder and digital sounder by solve the following
\beq
\left(\widehat{\bW}_A, \widehat{\bW}_D\right) = \min_{\bW_A,\bW_D} \lA \widehat{\bU}-\bW_A\bW_D \rA_F^2, \\
\text{subject to~~} \la[\bW_A]_{i,j}\ra=\frac{1}{\sqrt{N_r}} \label{receiver sounder}
\eeq
The problem above can be solved by using the OMP algorithm \cite{spatially}. We let the receiver sounder at the second step be $\widehat{\bW} = \widehat{\bW}_A \widehat{\bW}_D$.

First of all, we analyze how many channel uses are needed in the second step. In each channel use, we can sample one column of $\bH$ by letting the transmitter sounder and receiver sounder as $\bff_i$ and $\widehat{\bW}$, respectively. For each column, the number of channel uses is $1$. Therefore, in order to obtain the remaining $N_t-m$ columns, the required number of channel uses for the second stage is
\beq
K_2 = N_t - m.
\eeq

Now, for the $i$th column, the observation at the receiver side is in the following expression
\beq
\by_i = \widehat{\bW}^*\bH\bff_i +\widehat{\bW}^*\bn_i = \widehat{\bW}^*\bh_i + \widehat{\bW}^*\bn_i,
\eeq
where $\bn_i \sim \cC \cN(\boldmath{0}, \sigma^2\bI_{N_{RF}})$ is the noise vector. Similarly, the estimation of $\bh_i$ can be obtained by solving the following problem
\beq
\widehat{\bh}_i=\min_{\bh_i} \lA \by_i - \widehat{\bW}^*\bh_i \rA_F^2. \label{stage2}
\eeq
The problem \eqref{stage2} has more than one solution. It is because that $\widehat{\bh}_i$ and $\widehat{\bh}_i+\bd$ with $\bd \in \text{Null}(\widehat{\bW}^*)$ share the same objective value. Among all the solutions, the solution which has the smallest 2-norm is given by
\beq
\widehat{\bh}_i = \widehat{\bW} \by_i.
\eeq
After $(N_t-m)$ channel uses, the columns estimation $\bH_R = \widehat{\bH}_{m+1:N_t} \!\!= \!\! [\bh_{m+1},\bh_{m+2},\cdots,\bh_{N_t}]$ will be obtained. Combining the estimation result $\widehat{\bH}_S$ from the first stage, the estimation of the channel matrix is given by $\widehat{\bH}=[\widehat{\bH}_S, \widehat{\bH}_R]$. The whole algorithm is given in Algorithm \ref{alg1}.

In the following, we will analyze the complexity of the proposed two-stage method. We divide the complexity into two parts: channel use overhead and computational complexity.
For the computational complexity, the main complexity comes from the first stage, where we compute the column subspace through SVD. Specifically, the computational complexity for the first stage is $ \mathcal{O}(m^2N_r)$.
Note that for the total channel uses, after combining the necessary channel uses for the two stages, we can have
 \beq
 K =  \frac{m N_r}{N_{RF}} + (N_t-m). \label{channel uses}
 \eeq
Therefore, the number of channel uses is quite low. Specifically, when we let $m=L$, the number of channel uses will be ${L N_r}/{N_{RF}} + (N_t-L)$. Thus, if more RF chains are adopted, the required channel uses can be further reduced.

\section{Simulation Results}
In this section, we numerically evaluate the NMSE of the proposed two-stage channel estimation algorithm.
The numerical simulation setting is first discussed.
\begin{figure}[t]
\centering
\includegraphics[width=3 in]{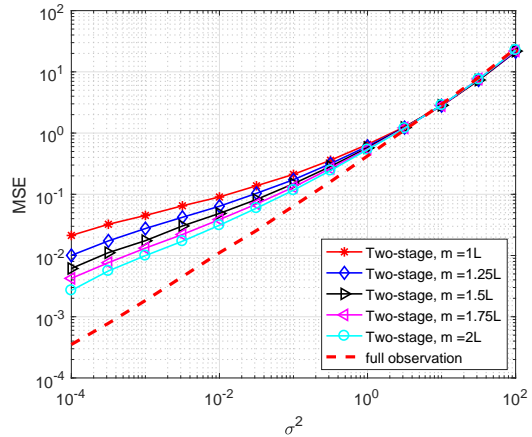}
\caption{Performance evaluation} \label{effect_of_m}
\end{figure}

(1) \emph{Channel model}:
We assume the prevalent physical channel representation  that models sparse millimeter wave MIMO channels \cite{RobertOver, brady}.
We assume the dimension of the channel is $N_r = 32$, and $N_t=128$. We further assume $L=4$ paths, and $N_{RF} =6$ RF chains.

(2) \emph{Performance Evaluation}:
The NMSE statistics across different noise levels and channel uses are evaluated.
Each curve is obtained after averaging over $200$ channel realizations.

First of all, we illustrate the effect of $m$ on the estimation accuracy in Fig. \ref{effect_of_m}. Here, according to \eqref{channel uses}, different values of $m$ mean different channel uses. With the increasing of $m$ in the first step, which means that more columns are selected in the first step to estimate the column subspace. As a result, more accurate estimation of the column subspace will be obtained. As we can see from Fig. \ref{effect_of_m}, when the value of $m$ increases, the MSE will experience a reduction to some extent. The red dot line denotes the scenario that we observe each entry of the channel matrix, and the estimation is given by solving the following problem
\beq
\min_{\bH} \lA \bH - \bY \rA_F^2 \text{~~subject to~~} \rank(\bH) = L,
\eeq
where $\bY = \bH + \bN$ is the noisy observation of the channel matrix, and the elements in $\bN$ are i.i.d. Gaussian. As it is shown, there is a gap between the two-stage method with the result of full observations. This is because that the error exists in the estimation of the column subspace in the first stage, which will further affect the estimation accuracy of remaining channel matrix in the second stage. For the scenario of full observation, there does not exist this effect.

Then, we simulate the accuracy of column subspace with different values of $m$ in Fig. \ref{subspace distance}. The subspace distance is defined as
$\lA \bU \bU^* - \widehat{\bU} \widehat{\bU}^* \rA_2^2$.
As we can see in Fig. \ref{subspace distance}, when $m=2L$, we can acquire an accurate column subspace from the first stage. Also, with the increasing of $m$, the subspace estimation accuracy will be improved.

\begin{figure}[t]
\centering
\includegraphics[width=3 in]{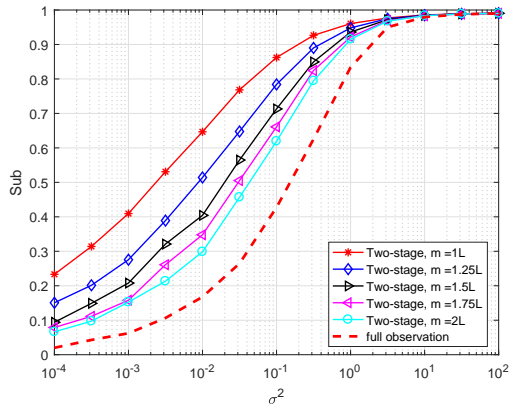}
\caption{Performance evaluation for subspace accuracy} \label{subspace distance}
\end{figure}

\section{Conclusion}
We divide the channel estimation task into two stages, where the first stage is to obtain the column subspace of the channel matrix, and the second stage will rely on the acquired subspace information to get the remaining channel matrix. As we analyzed, when more columns are sampled in the first stage, the column subspace information will be more accurate, which will bring a more accurate channel estimation result. Specifically, considering  the DoFs of the rank-$L$ matrix is $(N_r + N_t-L)L$, the number of channel uses is  only $K =  {m N_r}/{N} + (N_t-m)$. Therefore, the channel use overhead and computational complexity will benefit from the proposed method.
\bibliographystyle{IEEEtran}
\bibliography{IEEEabrv,Conference_mmWave_CS}
\clearpage

\end{spacing}
\end{document}